\documentclass[11pt]{article}

\usepackage[a4paper,margin=0.9in]{geometry}
\usepackage{amsmath,amssymb,amsthm}
\usepackage{natbib}
\defcitealias{jpart2026}{PMRA, 2026}
\usepackage[hidelinks]{hyperref}
\usepackage{microtype}
\usepackage{booktabs,tabularx}
\usepackage{docmute}

\newtheorem{lemma}{Lemma}
\newtheorem{proposition}{Proposition}
\newtheorem{corollary}{Corollary}

\newcommand{\E}{\mathbb{E}}
\newcommand{\Var}{\operatorname{Var}}
\newcommand{\Cov}{\operatorname{Cov}}
\newcommand{\CV}{\operatorname{CV}}
\newcommand{\Beta}{\operatorname{Beta}}
\newcommand{\1}{\mathbf{1}}

\newtheorem*{lemmaone}{Lemma 1 (Curation)}
\newtheorem*{propositionone}{Proposition 1 (Quotas versus equal-load fees)}
\newtheorem*{corollaryone}{Corollary 1 (Abundance can turn fees into an access screen)}

\newif\ifwithappendix
\withappendixtrue

\title{Submission quotas as self-screening\\ when manuscript production is cheap}
\author{Steve Lawford\thanks{ENAC (University of Toulouse), 7 avenue Edouard Belin, BP 54005, 31055 Toulouse Cedex 4, France (email: steve.lawford@enac.fr).}}

\date{}

\begin{document}

\maketitle

\begin{abstract}
\noindent Generative AI enables authors to produce more manuscripts at lower cost. But evaluation by expert editors and referees remains scarce. In this context, the value of an author's submission rights can increase. Major research repositories and journals are reacting by imposing submission quotas and fees. Non-transferable author quotas transform submission of a manuscript into a revealed-priority decision. Authors construct a portfolio of their highest-ranked manuscripts. A common fee charges an author for every manuscript they submit. Both methods have drawbacks. Quotas do not flexibly allocate scarce review capacity to authors who have unusually strong realised candidate sets. Fees can generate heterogeneous submission thresholds that may be systematically related to an author's baseline portfolio quality. In this paper, I derive the condition under which submission quotas yield weakly higher mean reviewed value than fees, given an equal expected review load. When project values are bounded and candidate manuscripts are abundant, an equal-load fee can price out some authors entirely.
\end{abstract}

\noindent\textbf{Keywords:} academic publishing; generative AI; peer review; submission quotas; screening.\\
\textbf{JEL codes:} A14, D82, I23, L82.

\section{Introduction}

Generative AI can sharply reduce an author's cost of producing a manuscript. Expert editorial and referee capacity has not expanded comparably. This creates a technological asymmetry in the peer review process. Submissions to \emph{Organization Science} rose by 42\% following the launch of OpenAI's ChatGPT in November 2022. The increase was largely concentrated in manuscripts that contained substantial AI-generated text. Those submissions were much more likely to be desk rejected \citep{gartenberg2026}. Multiple studies document rapid growth in LLM-assisted and LLM-modified scientific writing \citep{liang2025,kusumegi2025}. Pressure on the review process is not new. Submissions to leading economics journals rose substantially between 1990 and 2012, while the number of published articles fell from the late 1970s onwards \citep{card2013}. An increase in the number of NBER working papers released in the same week reduces media attention for each paper and harms subsequent publication and citation outcomes \citep{lusher2023}.

Research platforms and journals are reacting by rationing authors' access to expert evaluation and moderation. On 1 October 2026, the widely used preprint repository arXiv imposed a submitter limit of two submissions per calendar month across all categories, in addition to its existing limit of three active submissions. Submissions to arXiv nearly doubled in the two years to September 2026, reaching 40,363 in that month; arXiv explicitly hopes that authors will ``curate and submit only their best work'' \citep{boboris2026}. The open-access electronic journal \emph{Transactions on Machine Learning Research} (TMLR) introduced an annual author-level generalised harmonic submission quota effective 1 July 2026, after its submissions tripled in a year; ``many individuals [submitted] four to six papers within one or two weeks'' \citep{tmlr2026}. The \emph{Journal of Public Administration Research and Theory} (JPART) began an author-level submission-cap pilot on 1 July 2026; from 1 January 2027, all listed authors will be limited to three submissions in a calendar year, with no more than two desk rejections \citepalias{jpart2026}. \emph{The Accounting Review} now limits individual authors to ``eight first-round submissions over a rolling 24-month period'' \citep{tar2026}. The \emph{International Conference on Learning Representations} (ICLR) has announced author-level submission limits for 2027 \citep{iclr2026}. Prices are another response: the journal \emph{California Management Review} introduced submission fees on 1 September 2026 ``in response to a rapidly changing scholarly-publishing environment and a sharp increase in manuscript volume'', noting that its existing submission caps had proved insufficient \citep{cmr2026}.

In this paper, I ask what a quota achieves beyond reducing the volume of submissions. If authors possess private information about the relative quality of their own manuscripts, then a non-transferable quota creates an opportunity cost of nominating one manuscript over another, without the platform needing to observe manuscript quality, detect AI use, or charge a submission fee. The quota I study uses ordinal information: it is robust to cardinal miscalibration but not to ranking errors. In a survey of submitting authors at a major computer science conference, authors substantially overestimated the acceptance probability of their papers (with a median prediction of 70\% for an approximately 25\% actual acceptance rate) \citep{rastogi2024}. By contrast, within-author rankings at the \emph{International Conference on Machine Learning} (ICML) outperformed peer-review scores in predicting future citation counts \citep{su2026impact}.

The paper is organised as follows. First, I show that expanding an author's candidate set cannot worsen their portfolio of highest-ranked manuscripts. Second, in a benchmark with uniformly distributed within-author project values, I compare quotas and fees at equal expected review load and derive a necessary and sufficient condition under which a quota yields higher mean reviewed value than a submission fee. I allow baseline portfolio quality, fee sensitivity and cardinal calibration to vary across authors. Third, given bounded project values, I show how, when candidate manuscripts become more abundant, an equal-load submission fee can price out some authors entirely. Although generative AI motivates this paper, it is not needed for the mechanism. Additional derivations, numerical examples, and boundary cases are available in the Supplementary Appendix.

\section{Related work}

It is well known in economics that submission costs can function as screening devices. \citet{azar2005} and \citet{leslie2005} observe that editorial delays impose a significant cost on authors that may discourage submission of low-quality manuscripts; \citet{cotton2013} allows editors to choose fees and delays jointly and finds that a mixture of moderate fees and delays provides optimal screening within his model. \citet{ellison2002} documents publication delays in economics; \citet{hadavand2024} report lengthy delays, with the mean time from manuscript submission to acceptance exceeding two years at some leading journals. \citet{fourie2026} concludes that submission fees reduce system-wide demand for reviewers only when some authors are persuaded to shorten their submission chains; \citet{lopezlira2026} derive optimal fees and capacity under AI-driven submission growth. Desk rejections are increasingly common in economics \citep{card2020}. These papers study fees, delays or editorial screening rather than the fixed, non-transferable quota of submission rights I analyse below.

Related research examines alternative mechanisms for journal submissions and information asymmetry in editorial decisions. \citet{pruferzetland2010} propose an auction mechanism whereby editors bid for papers they want to publish; \citet{azar2015} models a review process in which authors know manuscript quality while referees receive a noisy signal; \citet{pernagallo2023} studies asymmetric information between researchers and editors; \citet{pernagallo2025} studies scientific publication under submission fees and free submission. In this paper, I consider a different allocation mechanism: non-transferable submission quotas with a common fee at a fixed expected review load.

Recent work and policy have focused on quota implementation under co-authorship. \citet{shah2026} develops a harmonic quota; \citet{tmlr2026} adopt a generalised harmonic rule. \citet{mumcuyilmaz2026} propose quotas that assign role-specific quota costs. The fee--quota comparison belongs to the classic prices-versus-quantities tradition \citep{weitzman1974}. Proposition 1 below is analogous to the result that rationing can be preferable to prices when willingness to pay is a poor proxy for value \citep{weitzman1977, dworczak2021, akbarpour2024}. Here, each author selects manuscripts to submit from their privately ranked set of candidates. Hence a quota combines the between-author allocation of review rights with within-author curation. The comparison between quota and fee depends on the dispersion of fee-induced selectivity across authors and the relationship between that selectivity and baseline portfolio quality. Related ideas appear in work on linked-decision mechanisms \citep{jackson2007}, ranking and quota mechanisms \citep{frankel2014}, non-monetary allocation \citep{hylland1979, budish2011, prendergast2022}, and preference signalling in matching markets \citep{coles2013}.

\section{Model and analysis}

I study a partial-equilibrium model of manuscript submission. An author has $M$ candidate manuscripts with values $V_1,\ldots,V_M$. The $V_j$ are independently and identically distributed (i.i.d.) according to a continuous distribution $F$ that is supported on $[0,\infty)$ with finite mean. I denote by $g(V_j)$ the author's perceived return from manuscript $j$, where $g$ is strictly increasing and non-negative. I assume that the author can correctly rank the $V_j$. Since $g$ is strictly increasing, it follows that ranking manuscripts by perceived return $g(V_j)$ is equivalent to ranking them by $V_j$. I do not require that $g(V_j)=V_j$ so the author is not necessarily cardinally calibrated. There is a single research platform. It grants the author $K<M$ non-transferable submission rights. It follows from $g \ge 0$ that the quota is almost surely binding. Let $V_{(1:M)}\leq\cdots\leq V_{(M:M)}$ denote the order statistics. The average value of the $K$ highest-valued candidate manuscripts is given by
\[
Q_K(M)=\frac{1}{K}\sum_{r=M-K+1}^{M}V_{(r:M)}.
\]
Each $V_j$ may also be interpreted as the author's private signal of the underlying scientific value $S_j$. Let $m(v) = \E(S_j \mid V_j=v)$ denote the expected scientific value conditional on the signal. If the pairs $(V_j, S_j)$ are independently and identically distributed across $j$ then $\E(S_j \mid V_1=v_1, \ldots, V_M=v_M) = \E(S_j \mid V_j = v_j) = m(v_j)$, so the expected scientific value of manuscript $j$ only depends on its own signal $V_j$, even after seeing the whole candidate set. Suppose also that $m$ is non-decreasing in $v$, so that candidate manuscripts with higher $V_j$ also have weakly higher expected scientific value. The author does not need to observe $S_j$. Under this interpretation, the selection results in Lemma 1 carry over to expected scientific value with weak inequalities. The fee--quota ranking in Proposition 1 is preserved under a common affine transformation $m(v)=a+bv$, with $b>0$, but may not be preserved under a general non-decreasing $m$. 

\begin{lemma}[Curation under a binding quota]
Suppose that the author has $M$ candidate manuscripts and a binding quota of $K<M$ submissions.
\begin{enumerate}
\item The author selects the $K$ candidates with the highest values $V_j$. Since $g$ is strictly increasing, these are also the $K$ candidates with the highest perceived returns $g(V_j)$. Hence the selected set is invariant to the author's cardinal calibration.

\item For any realised candidate set, adding a candidate manuscript cannot reduce the total, and hence the average, value of the selected manuscripts.

\item If the candidate values are i.i.d.\ from a continuous distribution, then adding a candidate strictly increases the expected average value of the selected portfolio:
\[
\E Q_K(M+1)>\E Q_K(M).
\]

\item If the candidate values are i.i.d. uniform on $[0,1]$, then
\[
\E Q_K(M)=1-\frac{K+1}{2(M+1)}.
\]
\end{enumerate}
\end{lemma}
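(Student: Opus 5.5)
The four parts need different tools: a rearrangement argument for part 1, a direct set comparison for part 2, a strictness argument on a positive-probability event for part 3, and the beta moments of uniform order statistics for part 4. I take them in order.

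For part 1, the author chooses a subset $S$ with $|S|=K$ to maximise $\sum_{j\in S} g(V_j)$. Because $g\ge 0$, submitting fewer than $K$ manuscripts never does better, so the quota binds; with a continuous $F$ any ties among the $V_j$ occur with probability zero. An exchange argument then finishes it. If $S$ omits some $j$ while containing some $i$ with $V_i<V_j$, swapping them strictly raises the objective, since $g$ is strictly increasing. So the optimum is the top-$K$ set by $V_j$. This set depends on $g$ only through its ordering, which gives invariance to cardinal calibration.

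For part 2, fix a realisation and let $T$ be the optimal top-$K$ set among the first $M$ candidates. After adding candidate $M+1$, $T$ is still feasible. The new optimum, the top $K$ of $M+1$ values, therefore has total value at least $\sum_{j\in T}V_j$. Dividing by $K$ gives the statement for averages.

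For part 3, I note that the pathwise difference $Q_K(M+1)-Q_K(M)$ is nonnegative by part 2. It equals $\bigl(V_{M+1}-V_{(M-K+1:M)}\bigr)^+/K$, since the new candidate displaces the current $K$-th best exactly when it exceeds it. The main step is to show this is strictly positive with positive probability. The continuous distribution $F$ is not degenerate, so $V_{M+1}$ and $V_{(M-K+1:M)}$ are independent and non-degenerate. Hence $P(V_{M+1}>V_{(M-K+1:M)})>0$. A concrete way to see this: the event that $V_{M+1}$ is the largest of the $M+1$ exchangeable continuous draws has probability $1/(M+1)$, and on it the difference is strictly positive almost surely. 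A nonnegative variable that is positive with positive probability has positive expectation. Finite mean ensures all expectations are well defined. This strictness step is the only one needing care, and the exchangeability argument handles it cleanly.

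For part 4, I use the standard fact that for i.i.d.\ $U[0,1]$ draws, $V_{(r:M)}\sim\Beta(r,M-r+1)$, so $\E V_{(r:M)}=r/(M+1)$. Summing over $r=M-K+1,\dots,M$ gives
\[
\frac{1}{K(M+1)}\cdot\frac{K(2M-K+1)}{2}=\frac{2M-K+1}{2(M+1)}=1-\frac{K+1}{2(M+1)}.
\]
This closed form also confirms part 3 directly in the uniform case, because it is strictly increasing in $M$.
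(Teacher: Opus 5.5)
Your proof is correct and follows the paper's argument essentially step for step. You use the same exchange argument for part 1 and the same enter-or-not comparison for parts 2--3 (your formula $(V_{M+1}-V_{(M-K+1:M)})^+/K$). For strictness you use the same exchangeability device: the paper takes the event that the new draw lands in the top $K$, with probability $K/(M+1)$, while you take the event that it is the maximum, with probability $1/(M+1)$. Part 4 uses the same Beta order-statistic means.

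One small caution concerns part 3. Independence plus non-degeneracy does not by itself give $P(V_{M+1}>V_{(M-K+1:M)})>0$, since two independent non-degenerate variables can have disjoint supports. So it is your exchangeability argument, not that preceding sentence, that actually carries part 3.
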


\noindent\emph{Proof.}
Suppose that a selected candidate has a lower value $V_j$ than one that is not selected. Because $g$ is strictly increasing, switching these two manuscripts will increase the author's total perceived return. This implies that a binding quota will select the $K$ highest-value candidates, and any strictly increasing transformation of the values will leave the selected set unchanged. Now add one candidate to a realised set of $M$ candidates. There are two cases. If the new candidate does not enter the top $K$, the selected set is unchanged. If it does enter, it will replace the previous $K$th-highest candidate. This establishes that the total, and hence the average, value of the selected manuscripts cannot fall. Under i.i.d.\ sampling from a continuous distribution, the new candidate is among the top $K$ of the resulting $M+1$ candidates with positive probability $K/(M+1)$. On this event, both the total and average value of the selected portfolio strictly increase. Otherwise, the selected portfolio is unchanged. It follows that
\[
\E Q_K(M+1)>\E Q_K(M).
\]
Finally, if the candidate values are i.i.d.\ uniform on $[0,1]$, then
$\E V_{(r:M)}=r/(M+1)$, so
\[
\E Q_K(M)
 =\frac{1}{K}\sum_{r=M-K+1}^{M}\frac{r}{M+1}
 =1-\frac{K+1}{2(M+1)}.
\]
\hfill$\square$

In the absence of rationing, an increase in the number of candidates can generate additional submissions and raise the platform's review load. On the other hand, subject to a binding quota, each active author will still submit exactly $K$ manuscripts, but they will now choose them from a larger candidate set. Consequently, the selected portfolio cannot worsen and will indeed improve in expectation under the i.i.d.\ assumption of Lemma 1. Sequential submissions may yield smaller gains. This is not a claim that AI raises scientific quality. Additional candidates may be weak. What matters here is that authors' rankings are sufficiently informative for them to reserve their submission rights for candidates with higher expected scientific value.

Now consider $N$ active authors. The $i$th author has $M>K$ candidate manuscripts with values $V_{ij}=\theta_i+U_{ij}$ where the $U_{ij}$ are i.i.d. uniform on $[0,1]$. In this framework, $\theta_i\geq0$ represents a baseline portfolio-quality parameter, while $U_{ij}$ is a manuscript-specific component of value. Subject to a common fee $p$, author $i$ will submit candidate $j$ whenever $g_i(V_{ij})\geq \mu_i p$, where $g_i(V_{ij})$ is that author's perceived return and $\mu_i > 0$ is a reduced-form fee sensitivity that converts the fee into the effective cost of submission. A larger $\mu_i$ makes a given fee more restrictive, holding $g_i$ fixed.

Assume that $g_i\geq0$ is continuous and strictly increasing. The submission rule can then be written as $U_{ij}\geq d_i$, where $d_i\in[0,1]$ is an effective cut-off. If $\mu_i p\leq g_i(\theta_i)$, then $d_i=0$ and all candidate manuscripts are submitted. If $g_i(\theta_i)<\mu_i p<g_i(\theta_i+1)$, the cut-off is interior and satisfies $g_i(\theta_i+d_i)=\mu_i p$. If $\mu_i p\geq g_i(\theta_i+1)$, then $d_i=1$ and the submission probability is zero. Equivalently, let $c_i=\theta_i+d_i$. Then author $i$ submits manuscript $j$ if and only if $V_{ij}\geq c_i$. It is useful to interpret $d_i$ as the within-author cut-off, while $c_i$ is the corresponding cut-off in manuscript value.

I compare the fee and quota at equal expected review load, treating fee revenue as a transfer. Under the fee, the expected review load is $M\sum_{i=1}^N(1-d_i)$ and the quota generates exactly $NK$ submissions. The expected review load is continuous in $p$. Furthermore, expected load is $NM$ at $p=0$ and zero for sufficiently large $p$. This guarantees that, for any intermediate expected load, there exists a fee that generates it, including $NK$. It follows that the equal-load fee satisfies
\[
\bar d\equiv \frac{1}{N}\sum_{i=1}^N d_i = 1-\frac{K}{M}.
\]
Since $1-d_i$ is author $i$'s expected submission rate, a higher $\bar d$ corresponds to greater average fee-induced selectivity and a lower average submission rate. In the equal-load comparison, $\bar d$ is fixed and the average expected submission rate is identical under the fee and the quota.

Even though the equal expected review load determines the average fee-induced cut-off, the individual cut-offs $d_i$ may still differ substantially across authors. Let $\bar\theta$ denote the mean baseline portfolio quality, and let $\sigma_d^2$ denote the cross-author variance of $d_i$. Finally, denote by $\gamma=\Cov(\theta,d)$ the cross-author covariance between baseline portfolio quality and the fee-induced cut-off. All cross-author moments are defined using denominator $N$. Under both the quota and the fee, define mean reviewed value as expected total reviewed value divided by expected review load. Let $\sigma_c^2$ and $\sigma_\theta^2$ denote the cross-author variances of $c_i$ and $\theta_i$, respectively.

\begin{proposition}[Quotas versus equal-load fees]
Under the uniform benchmark above, mean reviewed values under the $K$-quota and the equal-load fee are, respectively,
\[
q_Q=\bar\theta+1-\frac{K+1}{2(M+1)},\qquad
q_F=\bar\theta+1-\frac{K}{2M}-\frac{M}{2K}\left(\sigma_d^2+2\gamma\right).
\]
Hence the quota yields weakly higher mean reviewed value if and only if
\[
\sigma_d^2+2\gamma\geq \tau(M,K)\equiv\frac{K(M-K)}{M^2(M+1)}.
\]
Equivalently, since $c_i=\theta_i+d_i$, the condition is $\sigma_c^2-\sigma_\theta^2\geq\tau(M,K)$.
\end{proposition}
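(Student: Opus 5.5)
Under the $K$-quota, each author submits exactly $K$ manuscripts, and these are the $K$ highest of $V_{ij}=\theta_i+U_{ij}$. Since adding the constant $\theta_i$ does not change the ranking, Lemma 1 (part 4) applies to the $U_{ij}$ directly, and author $i$'s expected average reviewed value is $\theta_i+1-\frac{K+1}{2(M+1)}$. Every author contributes the same load $K$, so the ratio of expected total value to expected load $NK$ is the simple average over authors. That average is $q_Q=\bar\theta+1-\frac{K+1}{2(M+1)}$.

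Under the fee, author $i$ submits manuscript $j$ exactly when $U_{ij}\ge d_i$. The expected number of submissions is $M(1-d_i)$. Each submitted manuscript has expected value
\[
\E(\theta_i+U_{ij}\mid U_{ij}\ge d_i)=\theta_i+\tfrac{1+d_i}{2}.
\]
The boundary cases $d_i\in\{0,1\}$ are covered by the same formulas, because the term carries zero weight when $d_i=1$. Expected total value is therefore
\[
M\sum_i(1-d_i)\left(\theta_i+\tfrac{1+d_i}{2}\right)=M\sum_i\left[(1-d_i)\theta_i+\tfrac{1-d_i^2}{2}\right].
\]
The load is $M\sum_i(1-d_i)=NM(1-\bar d)=NK$ by the equal-load condition.

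Next I rewrite the cross-author averages in terms of moments with denominator $N$. For the first term,
\[
\tfrac1N\sum_i(1-d_i)\theta_i=\bar\theta(1-\bar d)-\gamma.
\]
For the second,
\[
\tfrac1N\sum_i d_i^2=\bar d^2+\sigma_d^2.
\]
Dividing total value by $NK$ and using $1-\bar d=K/M$ gives
\[
q_F=\bar\theta-\tfrac{M}{K}\gamma+\tfrac{M}{2K}\left(1-\bar d^2-\sigma_d^2\right).
\]
Since $1-\bar d^2=(1-\bar d)(1+\bar d)=\tfrac KM\left(2-\tfrac KM\right)$, the middle piece is $\tfrac{M}{2K}(1-\bar d^2)=1-\tfrac{K}{2M}$. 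This yields the stated expression for $q_F$.

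The comparison follows. The inequality $q_Q\ge q_F$ is equivalent to
\[
\tfrac{M}{2K}(\sigma_d^2+2\gamma)\ge\tfrac{K+1}{2(M+1)}-\tfrac{K}{2M}.
\]
The right-hand side equals $\tfrac{M-K}{2M(M+1)}$. Multiplying both sides by $2K/M$ gives the threshold $\tau(M,K)$.

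For the equivalent form, $\sigma_c^2=\sigma_\theta^2+\sigma_d^2+2\Cov(\theta,d)$, so $\sigma_d^2+2\gamma=\sigma_c^2-\sigma_\theta^2$.

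The main point requiring care is not any single computation but the definition of mean reviewed value. It is the ratio of expected totals, not the average of per-author conditional means, and only this definition weights authors by their submission rates $(1-d_i)$. That weighting is what produces the covariance term $\gamma$.
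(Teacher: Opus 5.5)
Your proof is correct and follows essentially the same route as the paper. For the quota, you apply Lemma 1's uniform closed form shifted by $\theta_i$; for the fee, you compute the per-author expected reviewed value $M\{(1-d_i)\theta_i+(1-d_i^2)/2\}$, rewrite the cross-author averages via the covariance and variance identities, and impose the equal-load condition $\bar d=1-K/M$, followed by the same algebra for $\tau(M,K)$ and the $\sigma_c^2-\sigma_\theta^2$ form.
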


\noindent\emph{Proof.}
Under the quota, each author submits the $K$ highest-valued candidates. Since $\theta_i$ shifts all of author $i$'s candidate values by the same amount, the quantity $q_Q$ then follows immediately from Lemma 1. Under the fee, author $i$ submits whenever $U_{ij}\geq d_i$. The expected total value of this author's submitted manuscripts is therefore
\[
M\int_{d_i}^1(\theta_i+u)\,du
=
M\left\{(1-d_i)\theta_i+\frac{1-d_i^2}{2}\right\}.
\]
Summing across authors and dividing by the equal expected review load $NK$
gives
\[
q_F
=
\frac{M}{K}\left[
\bar\theta-\frac{1}{N}\sum_i\theta_i d_i
+\frac{1}{2}\left\{1-\frac{1}{N}\sum_i d_i^2\right\}
\right].
\]
Substituting
\[
\frac{1}{N}\sum_i\theta_i d_i=\bar\theta\bar d+\gamma,
\qquad
\frac{1}{N}\sum_i d_i^2=\bar d^2+\sigma_d^2,
\]
and the equal-load condition $\bar d=1-K/M$, it follows that
\[
q_F = \bar\theta+1-\frac{K}{2M}
-\frac{M}{2K}\left(\sigma_d^2+2\gamma\right).
\]
Hence $q_Q\geq q_F$ if and only if
\[
\sigma_d^2+2\gamma
\geq
\frac{K(M-K)}{M^2(M+1)}.
\]
Finally, since $c_i=\theta_i+d_i$,
\[
\sigma_c^2 = \sigma_\theta^2+\sigma_d^2+2\gamma,
\]
which gives the equivalent variance condition.
\hfill$\square$

Proposition~1 identifies two forces that can offset the flexibility advantage of a common fee. First, when the mean cut-off and the covariance term $\gamma$ are fixed, greater dispersion in the cut-offs $d_i$ will lower the mean reviewed value. Authors whose cut-offs are low will submit more manuscripts in expectation, including candidates that are further down their own rankings. On the other hand, authors with high cut-offs will submit fewer manuscripts. Second, $\gamma$ describes whether this variation in selectivity is systematically related to baseline portfolio quality. If $\gamma>0$, authors with higher $\theta_i$ will also have higher cut-offs and therefore submit fewer manuscripts, shifting review capacity away from stronger portfolios. If $\gamma<0$, stronger portfolios receive more review capacity and the fee performs better.

At the same time, a common threshold allows the number of submissions to vary with the realised quality of an author's candidate set, whereas a binding quota fixes that number at $K$. Hence, when $\sigma_d^2=0$, the fee performs strictly better because it can accommodate unusually strong or weak realised portfolios. The threshold $\tau(M,K)$ measures how large the combined effect of cut-off dispersion and its association with baseline portfolio quality must be to offset this flexibility advantage. It tends to zero as $M$ grows with $K$ fixed. This does not imply that quotas necessarily become more attractive: equal expected review load also implies $\bar d=1-K/M\to1$, leaving less scope for dispersion in the $d_i$. It is better to examine the limiting effect of manuscript abundance directly: I do this in Corollary~1 by allowing fee-induced cut-offs to attain the boundary.

A simple interior case gives the threshold in Proposition~1 a direct interpretation in terms of heterogeneity in fee sensitivity. Suppose that $\theta_i=0$ and $g_i(v)=v$ and all fee-induced cut-offs are interior. Then $d_i=\mu_i p$. Combining this relation with the equal-load condition implies that the quota weakly outperforms the fee if and only if
\[
\CV(\mu)^2 \ge \frac{K}{(M-K)(M+1)},
\]
where $\CV(\mu)=\sigma_\mu/\bar\mu$ is the coefficient of variation of fee sensitivity. The comparison is expressed in terms of cross-author heterogeneity in fee sensitivity. For example, when $M=10$ and $K=2$, the critical coefficient of variation is approximately $0.151$. The interior restriction is
\[
\frac{\max_i\mu_i}{\bar\mu}<\frac{M}{M-K}.
\]
This restriction has a simple implication as $M$ becomes large. If fee sensitivities are not identical, then $\max_i\mu_i / \bar\mu > 1$, whereas $M / (M-K) \to1$ as $M$ grows with $K$ fixed. The interior condition must eventually fail and the most fee-sensitive authors then reach $d_i=1$ and submit no manuscripts. Proposition~1 still applies; only the interior relation $d_i=\mu_i p$ ceases to hold for those authors.

\begin{corollary}[Abundance can turn fees into an access screen]
Let $\theta_i=0$ and $g_i(v)=v$ for all $i$, and let the $\mu_i$ be fixed, with a share $s\in(0,1)$ of authors at $\mu_{\min}$ and $r=\mu_{(2)}/\mu_{\min}>1$, where $\mu_{(2)}$ is the next-lowest distinct value. Under the equal-load fee, $d_i=\min\{\mu_i p,1\}$. If $M\geq Kr/\{s(r-1)\}$, every author with $\mu_i>\mu_{\min}$ is priced out, $q_F=1-K/(2sM)$, and the quota yields strictly higher mean reviewed value if and only if $s(K+1)M<K(M+1)$; in particular, whenever $s\leq1/2$.
\end{corollary}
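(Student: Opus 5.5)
The plan is to pin down the equal-load fee explicitly under the candidate configuration in which only the $\mu_{\min}$ authors remain active, and then check that this configuration is consistent. With $\theta_i=0$ and $g_i(v)=v$, the cut-off is $d_i=\min\{\mu_i p,1\}$, so a $\mu_{\min}$ author submits with probability $1-\mu_{\min}p$. The first step is to posit that every author with $\mu_i>\mu_{\min}$ has $d_i=1$. The equal-load condition $M\sum_i(1-d_i)=NK$ then reduces to
\[
sNM(1-\mu_{\min}p)=NK,
\]
which gives $\mu_{\min}p=1-K/(sM)$. This requires $sM>K$, which holds under the stated bound because $r/(r-1)>1$.

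The second step is consistency. The posited configuration is valid if and only if the next-lowest type is indeed priced out, that is $\mu_{(2)}p\ge1$, or equivalently $r\{1-K/(sM)\}\ge 1$. Rearranging gives $M\ge Kr/\{s(r-1)\}$, which is exactly the hypothesis. All higher types are then priced out a fortiori.

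Uniqueness of the equal-load fee also needs a brief check. Expected load is continuous and weakly decreasing in $p$. It is strictly decreasing wherever some author is interior, and at the solution the $\mu_{\min}$ authors are interior. Any fee generating load $NK$ therefore yields the same cut-offs, so the configuration is the one the statement refers to.

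The third step computes $q_F$. There are two routes.
\begin{enumerate}
\item \emph{Direct computation.} Only the $sN$ active authors submit, each with a uniform cut-off $d=1-K/(sM)$. Their submitted manuscripts have conditional mean $(1+d)/2=1-K/(2sM)$, and dividing expected total value by load gives $q_F=1-K/(2sM)$.
\item \emph{Via Proposition 1.} With $\gamma=0$ and $\sigma_d^2=s(1-s)(1-d)^2$, substituting $1-d=K/(sM)$ into the formula for $q_F$ should return the same expression.
\end{enumerate}
I would present the direct computation and note the agreement with Proposition 1 as a check.

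The final step compares the two mechanisms. Lemma 1 gives $q_Q=1-(K+1)/\{2(M+1)\}$, so $q_Q>q_F$ if and only if
\[
\frac{K}{sM}>\frac{K+1}{M+1},
\]
which is equivalent to $s(K+1)M<K(M+1)$. For $s\le1/2$, the left side is at most $(K+1)M/2$. Since $K(M+1)-(K+1)M/2=\{(K-1)M+2K\}/2>0$ for $K\ge1$, the inequality holds.

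I expect the main obstacle to be the second step, together with the uniqueness check. Its purpose is to establish that the pricing-out regime is exactly the one implied by the bound on $M$, and that no alternative equal-load configuration exists in which some $\mu_{(2)}$ authors remain interior. The monotonicity argument above handles the latter.
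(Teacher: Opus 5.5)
Your proposal is correct and follows essentially the same route as the paper: you posit that only the $\mu_{\min}$ group is active, solve the equal-load condition for $d=1-K/(sM)$, show that $\mu_{(2)}p\geq 1$ is equivalent to $M\geq Kr/\{s(r-1)\}$, compute $q_F=(1+d)/2$, and compare with $q_Q$ from Lemma 1, using the same algebra for the case $s\leq 1/2$. Your uniqueness check for the equal-load fee (expected load is strictly decreasing wherever it is positive) and your cross-check via Proposition 1 with $\sigma_d^2=s(1-s)(1-d)^2$ are sound additions that the paper does not spell out, but they do not change the argument.
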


\noindent\emph{Proof.} Suppose first that only the $\mu_{\min}$ group is active. Equal expected review load then requires $sNM(1-d)=NK$ so $d=1-K/(sM)$, hence $p=d/\mu_{\min}$. The next-lowest fee-sensitivity group is priced out when $\mu_{(2)}p\geq1$, that is, when $M\geq Kr/\{s(r-1)\}$. All authors with larger $\mu_i$ are then priced out too. For the remaining active authors, submitted values are uniform on $[d,1]$, so $q_F=(1+d)/2=1-K/(2sM)$, while $q_Q=1-(K+1)/\{2(M+1)\}$ by Lemma 1. Hence $q_Q>q_F$ if and only if $s(K+1)M<K(M+1)$. If $s\leq1/2$, then $s(K+1)M\leq(K+1)M/2<K(M+1)$, which proves the final claim. \hfill$\square$

Corollary~1 describes an access effect. As $M$ increases, higher fee-induced cut-offs are needed to maintain the same expected review load. If manuscript values are bounded, and authors are sufficiently fee-sensitive, these cut-offs can eventually reach the upper boundary. Those authors then have zero submission probability and are priced out entirely. Under a quota, by contrast, each active author retains $K$ submission rights. The exclusion threshold depends on the share $s$ of least fee-sensitive authors rather than on the number of authors $N$. For example, if $K=2$, $r=1.5$, and $s=1/2$, then from $M=12$ onward only the least fee-sensitive group submits under the equal-load fee. The sharp transition, and the resulting closed-form threshold, is a direct consequence of the gap between the mass point at $\mu_{\min}$ and the next-lowest type. With smoother heterogeneity in fee sensitivity, participation would decline more gradually. Complete exclusion also requires bounded manuscript values. With unbounded values, sufficiently fee-sensitive authors would typically submit less, but would retain a positive probability of drawing a manuscript valuable enough to clear the fee. The result concerns mean reviewed value at a fixed expected review load, rather than general welfare. As $M\to\infty$, with $K$ and $s$ fixed, it follows that $q_Q-q_F\to0$, although more fee-sensitive authors remain excluded under the fee.

\section{Design implications}

The results give some insight into the design of effective submission policies. Lemma~1 shows that quotas create an opportunity cost. Specifically, using a submission right for one candidate prevents an author from using it for another manuscript. There are parallels in research on ranking and budget mechanisms and linked-decision mechanisms \citep{frankel2014, jackson2007}. The lemma provides some justification for the suggestion that quotas should be applied to first-round submissions rather than acceptances, since an acceptance-based rule does not require authors to prioritise before submission. Invited manuscript revisions can be treated separately because the platform has already chosen to continue reviewing the manuscript. The accounting unit matters for the same reason. A submitter-level rule can be weakened when co-authors rotate submitters. In addition, charging every co-author fully may discourage collaboration. In principle, harmonic accounting rules may more directly preserve the scarcity of submission rights \citep{shah2026,tmlr2026}.

In Proposition~1, fees and quotas are seen to act on different margins. They are not necessarily substitutes. A fee changes participation and submission thresholds across authors. A quota instead limits the maximum number of submissions. A combination may have some merit: a fee could be used to limit submission intensity, while a quota avoids extreme submission volume by an individual author and forces authors to prioritise their work \citep{cmr2026,tar2026}. Corollary~1 shows that when manuscript values are bounded and there is an abundance of manuscripts, an equal-load fee can price out sufficiently fee-sensitive authors altogether. By contrast, a quota reserves submission rights across active authors. A quota directly constrains the intensive margin, while the fee can also affect the extensive margin by inducing some authors not to submit. Neither mechanism, however, directly controls the number of active authors in the system. Aggregate submissions can still rise when the number of active authors increases, or when manuscripts are redirected to uncapped platforms. Furthermore, the model takes candidate sets as given, although it is plausible that submission rules may affect research effort, manuscript development, bundling, and project choice. Related work by \citet{ellison2002standards} shows more generally that publishing institutions and standards can affect how manuscripts are developed.

\section{Conclusion}

In this paper, I study the problem of rationing expert review capacity when manuscript production is cheap. Submission fees and non-transferable quotas use different information to address the issue. A common fee uses an author's willingness to pay, and allows submission volume to vary across authors and realised candidate sets. A quota, however, gives an author a fixed number of submission rights and makes the author prioritise and submit their highest-ranked manuscripts. A quota is less flexible than a fee; its benefit is that allocation depends on ordinal ranking information within an author's set of candidate manuscripts, even when cardinal assessments and fee sensitivities vary across authors. When manuscript values are bounded, the distinction can become sharper still, because a sufficiently high fee can price out some authors from the submission pool entirely.

\clearpage

{\footnotesize
\setlength{\bibsep}{0pt}
\bibliographystyle{elsarticle-harv}
\bibliography{ms}
}

\clearpage

\medskip\noindent\textbf{Declaration of generative AI and AI-assisted technologies in the manuscript preparation process.}
During the preparation of this work, the author used OpenAI ChatGPT (GPT-5.6 Sol) and Anthropic Claude (Opus 5.5) to assist with parts of the literature search. All sources suggested by these tools were independently reviewed and verified by the author. The author takes full responsibility for the content of the publication.

\clearpage

\ifwithappendix
  \clearpage
  \begingroup


\setcounter{section}{0}
\setcounter{subsection}{0}
\setcounter{equation}{0}

\renewcommand{\thesection}{\Alph{section}}
\renewcommand{\thesubsection}{\thesection.\arabic{subsection}}
\renewcommand{\theequation}{\thesection.\arabic{equation}}

\makeatletter
\@addtoreset{equation}{section}
\makeatother


\begin{center}
    {\LARGE Supplementary Appendix to\\[0.4em]
    \emph{Submission quotas as self-screening\\
    when manuscript production is cheap}\par}

    \vspace{1.2em}

    \renewcommand{\thefootnote}{\fnsymbol{footnote}}
    \setcounter{footnote}{0}
    {\large Steve Lawford\footnote{ENAC (University of Toulouse),
    7 avenue Edouard Belin, BP 54005, 31055 Toulouse Cedex 4,
    France (email: steve.lawford@enac.fr).}\par}

\end{center}

\renewcommand{\thefootnote}{\arabic{footnote}}
\setcounter{footnote}{0}

\vspace{1.5em}

\noindent This appendix provides detailed derivations, proofs, and additional discussion for the paper \emph{Submission quotas as self-screening when manuscript production is cheap}. The main article contains the statements and proofs of all principal results and can be read independently.

\section{Notation and assumptions}\label{sec:notation}

\subsection{Single-author curation problem}

An author has $M$ candidate manuscripts with values
\[
V_1,\ldots,V_M,
\]
drawn i.i.d. from a common continuous distribution $F$ supported on $[0,\infty)$ with finite mean. The author does not have to assign the correct cardinal value to a manuscript, but instead perceives a return $g(V_j)$, where $g$ is a strictly increasing and non-negative function. A platform permits the author to make at most $K<M$ non-transferable submissions.

Because $g\geq 0$, and there is no submission cost, an additional submitted manuscript can never lower an author's perceived return. Except for some zero-probability boundary cases, the author will therefore use all of their $K$ submission rights. The strict monotonicity of $g$ implies that the ordering of $g(V_j)$ is identical to the ordering of $V_j$, and so the quota only requires an ordinal ranking of the candidate set.

The order statistics are denoted by
\[
V_{(1:M)}\leq\cdots\leq V_{(M:M)},
\]
and the average value of the selected top-$K$ portfolio is defined by
\begin{equation}
Q_K(M)
=\frac{1}{K}\sum_{r=M-K+1}^{M}V_{(r:M)}.
\label{eq:qkm_def}
\end{equation}

Ties are ruled out by the assumption of continuity. When this is combined with i.i.d. draws, it follows that expected portfolio value will strictly improve as the candidate set becomes larger. Neither continuity nor i.i.d. draws are used in proving the core top-$K$ selection result.

\subsection{Signal interpretation}

The variable $V_j$ need not be a manuscript's literal underlying scientific value $S_j$. Instead, it can be interpreted as an author's private signal of that scientific value. Let $S_j$ be integrable and define
\[
m(v)=\E(S_j\mid V_j=v).
\]
Suppose the pairs $(V_j,S_j)$ are i.i.d.\ across candidates. The quota  uses the author's ranking of signals. The author need not observe $S_j$ directly.

The i.i.d. assumption on the pairs implies
\begin{equation}
\E(S_j\mid V_1,\ldots,V_M)=\E(S_j\mid V_j)=m(V_j).
\label{eq:signal_conditional}
\end{equation}
It follows that, conditional on the vector of private signals, the expected scientific value of a selected set $A$ determined by the private signals is given by
\begin{equation}
\E\left(\sum_{j\in A}S_j\,\middle|\,V_1,\ldots,V_M\right)
=\sum_{j\in A}m(V_j).
\label{eq:signal_sum}
\end{equation}
When $m$ is non-decreasing, selecting the largest $V_j$ weakly maximises expected scientific value. Strictness need not hold because $m$ may have regions that are flat, even when the underlying signals themselves are distinct.

In Proposition 1, suppose that the conditional expected scientific value is given by a common affine function $m(v)=a+bv$, with $b>0$. Since the quota and fee have the same expected review load, their mean expected scientific values satisfy $q_Q^S=a+bq_Q$ and $q_F^S=a+bq_F$. Hence $q_Q^S-q_F^S=b(q_Q-q_F)$, so the fee--quota ranking in Proposition 1 is preserved. This may not hold for a general non-decreasing function $m$.

\subsection{Multi-author fee--quota benchmark}

There are $N$ active authors, each of whom has the same number $M>K$ of candidate manuscripts, with
\begin{equation}
V_{ij}=\theta_i+U_{ij},
\qquad
U_{ij}\stackrel{\mathrm{i.i.d.}}{\sim}U[0,1],
\label{eq:benchmark_values}
\end{equation}
where $\theta_i\geq 0$ is a baseline portfolio-quality parameter and $U[0,1]$ is the uniform distribution.

Author $i$ perceives $g_i(V_{ij})$, where $g_i$ is non-negative, continuous, and strictly increasing. Author $i$ has a fee sensitivity component $\mu_i>0$. Under a common submission fee $p$, candidate $j$ is submitted when
\begin{equation}
g_i(V_{ij})\geq \mu_i p.
\label{eq:fee_rule}
\end{equation}
Authors differ in both manuscript valuations and fee sensitivity. The function $g_i(V_{ij})$ captures the former, while $\mu_i$ captures the latter.

Let the effective cut-off $d_i\in[0,1]$ in $U$-space be defined by
\begin{equation}
d_i(p)=
\begin{cases}
0, & \mu_i p\leq g_i(\theta_i),\\[3pt]
g_i^{-1}(\mu_i p)-\theta_i, & g_i(\theta_i)<\mu_i p<g_i(\theta_i+1),\\[3pt]
1, & \mu_i p\geq g_i(\theta_i+1).
\end{cases}
\label{eq:di_definition}
\end{equation}
The middle expression is well defined because $g_i$ is continuous and strictly increasing. Almost surely, author $i$ submits exactly those candidates for which
\[
U_{ij}\geq d_i.
\]
The corresponding cut-off in $V$-space is then
\begin{equation}
c_i=\theta_i+d_i.
\label{eq:ci_def}
\end{equation}

For later reference, define cross-author means, variances, and covariance by
\begin{equation}
\bar d=\frac{1}{N}\sum_{i=1}^N d_i,
\qquad
\bar\theta=\frac{1}{N}\sum_{i=1}^N\theta_i,
\label{eq:cross_means}
\end{equation}

\begin{equation}
\sigma_d^2=\frac{1}{N}\sum_{i=1}^N(d_i-\bar d)^2,
\qquad
\sigma_\theta^2=\frac{1}{N}\sum_{i=1}^N(\theta_i-\bar\theta)^2,
\label{eq:cross_vars}
\end{equation}

\begin{equation}
\gamma=\Cov(\theta,d)
=\frac{1}{N}\sum_{i=1}^N(\theta_i-\bar\theta)(d_i-\bar d).
\label{eq:gamma_def}
\end{equation}
These are cross-author population moments not sampling moments.

Mean reviewed value is defined by expected total reviewed value divided by expected review load. Under the quota, the load is fixed at $NK$, whereas under the fee, it is random.

\subsection{Role of the main assumptions}

The assumptions have simple roles:
\begin{itemize}
\item \textbf{Continuity} rules out ties; combined with i.i.d. draws, it ensures that the expected portfolio value strictly improves when the candidate set grows.
\item \textbf{Monotonicity of $g$} preserves manuscript rankings and is used in the threshold submission rules.
\item \textbf{i.i.d.\ sampling} is required for the order-statistic results, and i.i.d. \emph{pairs} $(V_j,S_j)$ further support the signal results.
\item \textbf{Uniform $U_{ij}$} facilitates the closed-form expressions that appear in Proposition~1 but it is not required for the top-$K$ curation mechanism to hold.
\item \textbf{Common $M$ and binding $K<M$} are used to give a quota review load of exactly $NK$.
\item \textbf{Bounded support $U\in[0,1]$} is essential for the pricing out result of Corollary~1.
\item \textbf{Equal expected review load} holds the expected use of review capacity fixed across the quota and the fee.
\item \textbf{Fee revenue treated as a transfer} means that fee payments do not carry a social cost; this paper does not develop a general welfare analysis.
\end{itemize}

\section{Detailed proof of Lemma 1: curation under a binding quota}\label{sec:lemma}

The main paper states the following result.

\begin{lemmaone}
Suppose that the author has $M$ candidate manuscripts and a binding quota of $K<M$ submissions.
\begin{enumerate}
\item The author selects the $K$ candidates with the highest values $V_j$. Since $g$ is strictly increasing, these are also the $K$ candidates with the highest perceived returns $g(V_j)$. Hence the selected set is invariant to the author's cardinal calibration.

\item For any realised candidate set, adding an additional candidate manuscript cannot reduce the total, and hence the average, value of the selected manuscripts.

\item If the candidate values are i.i.d.\ from a continuous distribution, then adding a candidate strictly increases the expected average value of the selected portfolio:
\[
\E Q_K(M+1)>\E Q_K(M).
\]

\item If the candidate values are i.i.d. uniform on $[0,1]$, then
\[
\E Q_K(M)=1-\frac{K+1}{2(M+1)}.
\]
\end{enumerate}
\end{lemmaone}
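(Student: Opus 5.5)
I would prove the four parts in order, treating (1) and (2) as deterministic statements about a fixed realisation and (3) and (4) as distributional statements.

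For part (1) I would use an exchange argument. Suppose a feasible selection $A$ with $|A|=K$ contains some $j$ and omits some $k$ with $V_k>V_j$. Then replacing $j$ by $k$ raises $\sum_{i\in A}g(V_i)$ by $g(V_k)-g(V_j)>0$, because $g$ is strictly increasing. Hence any optimum consists of the top $K$ values. Since $g\ge 0$, using all $K$ rights is weakly optimal, so the quota binds. Any two strictly increasing $g$ induce the same ordering, so the selected set does not depend on $g$.

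For part (2) I would fix a realisation and add $V_{M+1}$, splitting into two cases. If $V_{M+1}$ is below the current $K$th-highest value, the top-$K$ set is unchanged. Otherwise $V_{M+1}$ displaces the old $K$th-highest value, which it weakly exceeds. In either case the total cannot fall. An equivalent and cleaner route is to note that the top-$K$ sum is a maximum over $K$-subsets, and that enlarging the index set enlarges the feasible family.

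For part (3) I would write
\[
Q_K(M+1)-Q_K(M)=\tfrac1K\bigl(V_{M+1}-V_{(M-K+1:M)}\bigr)^+ \ge 0.
\]
By exchangeability and continuity there are almost surely no ties, so $V_{M+1}$ ranks in the top $K$ of the $M+1$ draws with probability $K/(M+1)>0$. On that event the increment is strictly positive almost surely. A nonnegative variable that is strictly positive on a positive-probability event has strictly positive expectation. I expect this strictness step to be the main subtlety. The statement must be conditioned on the ranking event and not on a mere inequality, and integrability must be checked. Integrability follows from $Q_K\le \frac1K\sum_j V_j$ and the finite mean of $F$.

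For part (4) I would invoke the standard fact that the $r$th order statistic of $M$ i.i.d.\ $U[0,1]$ draws is $\Beta(r,M-r+1)$, with mean $r/(M+1)$. Summing over $r=M-K+1,\dots,M$ gives
\[
\frac{K(2M-K+1)}{2(M+1)},
\]
and dividing by $K$ yields $1-(K+1)/\{2(M+1)\}$. As a consistency check, this expression is increasing in $M$, in agreement with part (3).
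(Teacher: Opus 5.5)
Your proposal is correct and follows essentially the same route as the paper. It uses the exchange argument for top-$K$ selection, the two-case displacement argument for adding a candidate, the probability-$K/(M+1)$ event on which the top-$K$ sum strictly rises, and the order-statistic means $r/(M+1)$ for the uniform closed form. Your explicit increment $\tfrac1K\bigl(V_{M+1}-V_{(M-K+1:M)}\bigr)^+$ and your integrability check via $Q_K\le\frac1K\sum_j V_j$ just make the paper's strictness step more precise.
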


\subsection{Top-K selection and invariance to cardinal transformations}

Let $A$ denote a feasible set of $K$ submitted candidates. Since each selected candidate gives perceived return $g(V_j)\geq0$, a binding quota leads the author to choose a set that maximises
\begin{equation}
\sum_{j\in A}g(V_j)
\qquad\text{subject to}\qquad |A|=K,
\label{eq:author_objective}
\end{equation}
where $|A|$ is the size of set $A$. Suppose that a maximising set $A$ includes candidate $a$ but excludes candidate $b$, with $V_b>V_a$. Then, strict monotonicity gives $g(V_b)>g(V_a)$. Replacing $a$ by $b$ changes the objective by
\[
g(V_b)-g(V_a)>0,
\]
which contradicts optimality. It follows that no selected candidate can lie below an unselected candidate. With probability one, the selected set is exactly the $K$ candidates with the highest values,
\[
\{V_{(M-K+1:M)},\ldots,V_{(M:M)}\}.
\]

Using the same argument, we can show invariance to any strictly increasing transformation. If $\tilde g$ is strictly increasing, then for every pair $a$ and $b$,
\[
V_a>V_b
\quad\Longleftrightarrow\quad
\tilde g(V_a)>\tilde g(V_b).
\]
Hence every strictly increasing cardinal transformation induces the same ordering and the same selected set. The quota needs the author to know which candidates are better than others, but not by how much.

\subsection{Adding a candidate cannot worsen the selected portfolio}

Fix an arbitrary realised set of $M$ values and let
\[
x_{[1]}\geq x_{[2]}\geq\cdots\geq x_{[M]}
\]
denote them in descending order. The sum selected under the $K$-quota is
\[
T_K(M)=\sum_{r=1}^{K}x_{[r]}.
\]
Now add a candidate with realised value $x$. If $x\leq x_{[K]}$, the new candidate will not enter into the top $K$ manuscripts, and
\[
T_K(M+1)=T_K(M).
\]
If $x>x_{[K]}$, the new candidate will enter the top $K$, thus displacing the old $K$th-best candidate, so
\[
T_K(M+1)=T_K(M)+x-x_{[K]}>T_K(M).
\]
Hence, for every realised candidate set,
\begin{equation}
T_K(M+1)\geq T_K(M).
\label{eq:deterministic_monotonicity}
\end{equation}
In other words, each relevant upper order statistic is weakly non-decreasing when the candidate set is enlarged. Dividing by $K$ gives
\begin{equation}
Q_K(M+1)\geq Q_K(M)
\label{eq:q_monotone_pathwise}
\end{equation}
where the first $M$ observations are kept and one new observation is added.

\subsection{Strict improvement in expectation under i.i.d. sampling}

Denote by $V_{[K]}^{\downarrow}$ the $K$th-largest realised value, conditional on the first $M$ draws. The new draw $V_{M+1}$ will enter an author's top $K$ set of manuscripts whenever
\[
V_{M+1}>V_{[K]}^{\downarrow}.
\]
Given a continuous non-degenerate distribution, this event has (almost surely) positive conditional probability. Using \eqref{eq:deterministic_monotonicity}, it follows that 
\begin{equation}
\E[T_K(M+1)-T_K(M)]>0,
\label{eq:strict_topk_gain}
\end{equation}
where the assumption of finite mean ensures integrability. Dividing \eqref{eq:strict_topk_gain} by $K$ gives
\begin{equation}
\E Q_K(M+1)>\E Q_K(M).
\label{eq:strict_q_gain}
\end{equation}

\subsection{Uniform closed form}

If $V_j\sim U[0,1]$, the $r$th ascending order statistic follows a Beta distribution,
\begin{equation}
V_{(r:M)}\sim\Beta(r,M+1-r),
\label{eq:beta_order_stat}
\end{equation}
and immediately
\begin{equation}
\E V_{(r:M)}=\frac{r}{M+1}.
\label{eq:uniform_order_mean}
\end{equation}
Therefore
\begin{equation}
\E Q_K(M) = \frac{1}{K}\sum_{r=M-K+1}^{M}\frac{r}{M+1} = \frac{1}{K(M+1)}\sum_{r=M-K+1}^{M}r.
\label{eq:q_uniform_sum}
\end{equation}
The sum contains $K$ consecutive integers. Hence
\begin{equation}
\sum_{r=M-K+1}^{M}r
=\frac{K\{(M-K+1)+M\}}{2}
=\frac{K(2M-K+1)}{2}.
\label{eq:arith_sum}
\end{equation}
Substituting into \eqref{eq:q_uniform_sum} yields
\begin{equation}
\E Q_K(M) =\frac{2M-K+1}{2(M+1)} =1-\frac{K+1}{2(M+1)}.
\label{eq:q_uniform_final}
\end{equation}

\subsection{Expected scientific value under informative private signals}

Adding one candidate cannot reduce the sum of the top $K$ values of $m(V_j)$ and so taking expectations gives a weak curation result for underlying scientific value; a new candidate can have a higher signal while leaving $m(V)$ unchanged when the conditional-mean function is locally flat.

The author need not observe a manuscript's true scientific value, but they must have an informative ranking about it. Under a purely noisy ranking, the quota still limits volume but the curation mechanism is lost.

\section{The equal-load fee benchmark}\label{sec:equal_load}

\subsection{Expected load under a fee}

For author $i$, a candidate is submitted when $U_{ij}\geq d_i$. Since $U_{ij}\sim U[0,1]$,
\begin{equation}
\Pr(U_{ij}\geq d_i)=1-d_i.
\label{eq:submit_prob}
\end{equation}
With $M$ candidates, the expected number submitted by author $i$ is
\begin{equation}
L_i(p)=M(1-d_i(p)).
\label{eq:author_load}
\end{equation}
This gives expected total review load under the fee equal to
\begin{equation}
L(p)=M\sum_{i=1}^N\{1-d_i(p)\}.
\label{eq:total_load}
\end{equation}
At $p=0$, it follows that $d_i=0$, so $L(0)=NM$. When $p$ is sufficiently large, every $d_i=1$, and so $L(p)=0$. By continuity of each ``clipped'' threshold $d_i(p)$, $L(p)$ is continuous. From
\[
0<NK<NM,
\]
the intermediate value theorem then establishes that at least one fee $p$ generates expected load $NK$.

\subsection{The equal-load restriction}

Choose such a fee. Equating \eqref{eq:total_load} to $NK$ gives
\begin{equation}
M\sum_{i=1}^N(1-d_i) = 
MN-M\sum_{i=1}^N d_i = NK.
\label{eq:load_equal_1}
\end{equation}
Dividing by $MN$,
\begin{equation}
1-\frac{1}{N}\sum_{i=1}^N d_i=\frac{K}{M},
\label{eq:load_equal_2}
\end{equation}
and it follows that
\begin{equation}
\boxed{\bar d=1-\frac{K}{M}.}
\label{eq:bar_d}
\end{equation}
This is a central result. It fixes the mean threshold across authors but does not fix its dispersion nor its covariance with baseline portfolio quality. Proposition~1 shows that the remaining cross-author features determine a fee's performance relative to a quota.

\section{Detailed proof of Proposition 1}\label{sec:prop}

The main paper states the following result.

\begin{propositionone}
Under the uniform benchmark above, mean reviewed values under the $K$-quota and the equal-load fee are, respectively,
\[
q_Q=\bar\theta+1-\frac{K+1}{2(M+1)},
\qquad
q_F=\bar\theta+1-\frac{K}{2M}-\frac{M}{2K}(\sigma_d^2+2\gamma).
\]
Hence the quota yields weakly higher mean reviewed value if and only if
\[
\sigma_d^2+2\gamma\geq
\tau(M,K)
\equiv
\frac{K(M-K)}{M^2(M+1)}.
\]
Equivalently, since $c_i=\theta_i+d_i$, the condition is $\sigma_c^2-\sigma_\theta^2\geq\tau(M,K)$.
\end{propositionone}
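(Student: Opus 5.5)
The plan is to compute the two mean reviewed values separately, each as expected total reviewed value divided by expected review load, and then compare them. Both loads equal $NK$ under the equal-load fee.

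\emph{Quota.} Under the quota, Lemma~1(1) says author $i$ submits the top $K$ of $V_{i1},\ldots,V_{iM}$. These are also the top $K$ of the $U_{ij}$, because $\theta_i$ shifts every candidate value by the same amount. The expected total for author $i$ is therefore $K\theta_i + K\,\E Q_K(M)$, where $\E Q_K(M)$ is evaluated for uniform draws. Lemma~1(4) gives $\E Q_K(M)=1-(K+1)/\{2(M+1)\}$. Summing over authors and dividing by $NK$ yields $q_Q=\bar\theta+1-(K+1)/\{2(M+1)\}$.

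\emph{Fee.} Under the fee, author $i$ submits exactly the candidates with $U_{ij}\ge d_i$. Linearity of expectation gives an expected total of $M\int_{d_i}^1(\theta_i+u)\,du = M\{(1-d_i)\theta_i+(1-d_i^2)/2\}$. I sum over $i$ and divide by $NK$, which is valid because the equal-load fee exists by the intermediate value argument already given. I then rewrite the cross-author averages using the population-moment identities
\[
\tfrac1N\sum_i\theta_i d_i=\bar\theta\bar d+\gamma,\qquad \tfrac1N\sum_i d_i^2=\bar d^2+\sigma_d^2.
\]
Substituting $\bar d=1-K/M$ should collapse the $\bar\theta$ terms: the coefficient is $(M/K)(1-\bar d)=1$. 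The constant becomes $(M/2K)(1-\bar d^2)=(M/2K)(K/M)(2-K/M)=1-K/(2M)$. The remaining terms are $-(M/K)\gamma-(M/2K)\sigma_d^2$, and together these give the stated $q_F$.

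\emph{Comparison.} The inequality $q_Q\ge q_F$ is equivalent to
\[
\frac{M}{2K}(\sigma_d^2+2\gamma)\ge \frac{K+1}{2(M+1)}-\frac{K}{2M}.
\]
The right side simplifies to $(M-K)/\{2M(M+1)\}$, and multiplying through by $2K/M$ gives $\tau(M,K)$. For the equivalent form, I expand $\Var(\theta+d)=\sigma_\theta^2+\sigma_d^2+2\gamma$, using the same denominator-$N$ moments.

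\emph{Main obstacle.} The only real care point is the definition of the ratio under the fee. The realised load there is random, so I must use the ratio of expectations (as the paper defines it) and not the expectation of a ratio. I must also check that the clipped cut-offs $d_i\in\{0,1\}$ still satisfy the integral formula; they do, giving $0$ contribution at $d_i=1$. The remaining work is algebraic simplification, where sign slips in the constant are the likeliest error.
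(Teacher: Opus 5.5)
Your proposal is correct and follows the paper's proof essentially step for step. It uses Lemma~1 with the common shift $\theta_i$ for $q_Q$, then the per-author integral $M\int_{d_i}^1(\theta_i+u)\,du$ together with the denominator-$N$ moment identities and $\bar d=1-K/M$ for $q_F$, and finally the expansion $\sigma_c^2=\sigma_\theta^2+\sigma_d^2+2\gamma$ for the equivalent form; your remarks on the ratio of expectations and the clipped cut-offs $d_i\in\{0,1\}$ match the paper's definitions and leave the argument unchanged.
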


\subsection{Mean reviewed value under the quota}

Author $i$ has candidate values $\theta_i+U_{ij}$. Adding a constant $\theta_i$ does not alter the within-author ranking. Consequently, the $K$ selected candidates have average value
\begin{equation}
\theta_i+Q_{K,U}(M),
\label{eq:quota_author_quality}
\end{equation}
where $Q_{K,U}(M)$ is the average of the top $K$ of the $M$ uniform $U$ draws. By \eqref{eq:q_uniform_final},
\begin{equation}
\E Q_{K,U}(M)
=1-\frac{K+1}{2(M+1)}.
\label{eq:u_topk}
\end{equation}
Because every active author contributes exactly $K$ reviewed manuscripts, mean reviewed value under the quota is the equally weighted average of the author-specific expected means:
\begin{equation}
q_Q =\frac{1}{N}\sum_{i=1}^N
\left\{\theta_i+1-\frac{K+1}{2(M+1)}\right\}
 = \boxed{\bar\theta+1-\frac{K+1}{2(M+1)}}.
\label{eq:qQ_detailed}
\end{equation}

\subsection{Expected reviewed value under the fee}

For a given candidate of author $i$, the reviewed value is
\[
(\theta_i+U_{ij})\1\{U_{ij}\geq d_i\}.
\]
Since $U_{ij}$ has density one on $[0,1]$,
\begin{equation}
\E\left[(\theta_i+U_{ij})\1\{U_{ij}\geq d_i\}\right]
=\int_{d_i}^{1}(\theta_i+u)\,du
= \theta_i(1-d_i)+\frac{1-d_i^2}{2}.
\label{eq:one_candidate_value}
\end{equation}
With $M$ candidates, the expected total reviewed value from author $i$ is
\begin{equation}
R_i^F
=M\left\{\theta_i(1-d_i)+\frac{1-d_i^2}{2}\right\}.
\label{eq:author_review_value}
\end{equation}
Summing this expression across authors,
\begin{equation}
R^F
=M\sum_{i=1}^N
\left\{\theta_i(1-d_i)+\frac{1-d_i^2}{2}\right\}.
\label{eq:total_review_value_fee}
\end{equation}
The equal-load fee generates expected review load $NK$, so the mean reviewed value is
\begin{equation}
q_F
=\frac{R^F}{NK}
= \frac{M}{K}
\left[
\frac{1}{N}\sum_{i=1}^N\theta_i(1-d_i)
+
\frac{1}{2N}\sum_{i=1}^N(1-d_i^2)
\right].
\label{eq:qF_start}
\end{equation}
We now simplify the two bracketed terms separately. For the first term,
\begin{align}
\frac{1}{N}\sum_i\theta_i(1-d_i)
&=\bar\theta-\frac{1}{N}\sum_i\theta_i d_i.
\label{eq:theta_term_1}
\end{align}
By the definition of covariance,
\begin{equation}
\gamma
=\frac{1}{N}\sum_i(\theta_i-\bar\theta)(d_i-\bar d)
=\frac{1}{N}\sum_i\theta_i d_i-\bar\theta\bar d,
\label{eq:cov_identity_expand}
\end{equation}
so
\begin{equation}
\frac{1}{N}\sum_i\theta_i d_i
=\bar\theta\bar d+\gamma.
\label{eq:theta_d_identity}
\end{equation}
Substituting this expression into \eqref{eq:theta_term_1},
\begin{equation}
\frac{1}{N}\sum_i\theta_i(1-d_i)
=\bar\theta(1-\bar d)-\gamma.
\label{eq:theta_term_final}
\end{equation}
For the second term,
\begin{align}
\frac{1}{N}\sum_i(1-d_i^2)
&=1-\frac{1}{N}\sum_i d_i^2.
\label{eq:d2_term_1}
\end{align}
Then, by the variance identity,
\begin{equation}
\sigma_d^2
=\frac{1}{N}\sum_i d_i^2-\bar d^2,
\label{eq:variance_identity}
\end{equation}
so
\begin{equation}
\frac{1}{N}\sum_i d_i^2
=\bar d^2+\sigma_d^2,
\label{eq:d2_identity}
\end{equation}
which gives
\begin{equation}
\frac{1}{N}\sum_i(1-d_i^2)
=1-\bar d^2-\sigma_d^2.
\label{eq:d2_term_final}
\end{equation}
Using \eqref{eq:theta_term_final} and \eqref{eq:d2_term_final} in \eqref{eq:qF_start}, it follows that
\begin{equation}
q_F
=\frac{M}{K}
\left[
\bar\theta(1-\bar d)-\gamma
+\frac{1}{2}\{1-\bar d^2-\sigma_d^2\}
\right].
\label{eq:qF_mid}
\end{equation}
The equal-load identity \eqref{eq:bar_d} then implies that
\begin{equation}
1-\bar d=\frac{K}{M},
\label{eq:one_minus_dbar}
\end{equation}
and the baseline-quality term reduces to
\begin{equation}
\frac{M}{K}\bar\theta(1-\bar d)=\bar\theta.
\label{eq:theta_reduces}
\end{equation}
Also,
\begin{equation}
1-\bar d^2
=(1-\bar d)(1+\bar d)
= \frac{K}{M}\left(2-\frac{K}{M}\right).
\label{eq:one_minus_dbar2}
\end{equation}
It follows that
\begin{equation}
\frac{M}{2K}(1-\bar d^2)
=\frac{1}{2}\left(2-\frac{K}{M}\right)
= 1-\frac{K}{2M}.
\label{eq:uniform_fee_base}
\end{equation}
Collecting terms in \eqref{eq:qF_mid} gives
\begin{equation}
q_F
= \bar\theta
+1-\frac{K}{2M}
-\frac{M}{K}\gamma
-\frac{M}{2K}\sigma_d^2
= \boxed{
\bar\theta+1-\frac{K}{2M}
-\frac{M}{2K}(\sigma_d^2+2\gamma)
}.
\label{eq:qF_final}
\end{equation}

\subsection{Deriving the exact quota condition}

Subtract \eqref{eq:qF_final} from \eqref{eq:qQ_detailed}:
\begin{align}
q_Q-q_F
&=
-\frac{K+1}{2(M+1)}
+\frac{K}{2M}
+\frac{M}{2K}(\sigma_d^2+2\gamma).
\label{eq:diff_start}
\end{align}
The first two terms have common denominator $2M(M+1)$:
\begin{equation}
-\frac{K+1}{2(M+1)}+\frac{K}{2M}
=
\frac{-M(K+1)+K(M+1)}{2M(M+1)}
=
\frac{K-M}{2M(M+1)}
=-\frac{M-K}{2M(M+1)}.
\label{eq:flex_gap}
\end{equation}
Therefore,
\begin{equation}
q_Q-q_F
=
-\frac{M-K}{2M(M+1)}
+\frac{M}{2K}(\sigma_d^2+2\gamma).
\label{eq:diff_final}
\end{equation}
The quota yields weakly higher mean reviewed value when $q_Q-q_F\geq0$. Multiplying by the positive quantity $2K/M$ yields
\begin{align}
\sigma_d^2+2\gamma
&\geq
\frac{K(M-K)}{M^2(M+1)} \equiv \tau(M,K).
\label{eq:quota_condition}
\end{align}
Define
\begin{equation}
\tau(M,K)
=\frac{K(M-K)}{M^2(M+1)}.
\label{eq:tau_def}
\end{equation}
Then the quota condition is simply the expression
\begin{equation}
\boxed{\sigma_d^2+2\gamma\geq\tau(M,K).}
\label{eq:quota_condition_box}
\end{equation}
Equality gives identical mean reviewed value; below the threshold the fee yields strictly higher mean reviewed value; above it the quota does.

\subsection{Equivalent variance representation}

Since $c_i=\theta_i+d_i$,
\begin{equation}
\sigma_c^2
=\Var(\theta+d)
=\sigma_\theta^2+\sigma_d^2+2\Cov(\theta,d)
=\sigma_\theta^2+\sigma_d^2+2\gamma.
\label{eq:var_c}
\end{equation}
Consequently,
\begin{equation}
\sigma_c^2-\sigma_\theta^2
=\sigma_d^2+2\gamma.
\label{eq:variance_equiv}
\end{equation}
Substituting into \eqref{eq:quota_condition_box} gives the equivalent condition
\begin{equation}
\boxed{
\sigma_c^2-\sigma_\theta^2\geq\tau(M,K).
}
\label{eq:variance_condition}
\end{equation}
The quota is favoured when the dispersion in overall submission cut-offs is greater than the dispersion in baseline portfolio quality by at least $\tau(M,K)$.

\section{Interpreting Proposition 1}\label{sec:interpret}

\subsection{The fee's flexibility advantage}

If $\sigma_d^2=0$, all authors face the same effective within-author threshold, and Proposition~1 yields
\begin{equation}
q_F-q_Q
=\frac{M-K}{2M(M+1)}>0.
\label{eq:fee_flex_advantage}
\end{equation}
The quota fixes submissions at $K$. Conversely, the fee allows submissions to adjust to the realised quality of each author's candidate set. The threshold $\tau(M,K)$ is the fee-induced distortion needed to offset this flexibility advantage.

\subsection{Why cut-off dispersion hurts the common fee}

With $\bar d$ fixed, lower cut-offs shift more review capacity towards less selective authors. Holding $\gamma$ fixed, under the uniform benchmark, this lowers mean reviewed value by
\[
\frac{M}{2K}\sigma_d^2.
\]
Differences in $d_i$ may reflect fee sensitivity, financing, or cardinal miscalibration.

\subsection{The covariance term}

Let $\gamma=\Cov(\theta,d)$. If $\gamma>0$, stronger authors have higher cut-offs and submit less, lowering $q_F$; if $\gamma<0$, the opposite occurs. If $d_i$ is constant across authors, then $\sigma_d^2=\gamma=0$.

\subsection{A numerical benchmark}

For $M=10$ and $K=2$,
\begin{equation}
\tau(10,2)
=\frac{2(8)}{10^2(11)}
=\frac{16}{1100}
\approx0.014545.
\label{eq:tau_numeric}
\end{equation}
If $\gamma=0$, the quota yields weakly higher mean reviewed value when
\begin{equation}
\sigma_d^2\geq0.014545,
\qquad\text{or equivalently}\qquad
\sigma_d\gtrsim0.1206.
\label{eq:sigmad_numeric}
\end{equation}

\section{Interior identity model and the coefficient-of-variation condition}\label{sec:interior}

Suppose now that
\[
\theta_i=0,
\qquad
g_i(v)=v,
\qquad
\mu_i p<1
\quad\text{for every }i.
\]
Since $V_{ij}=U_{ij}$, the fee condition $U_{ij}\geq\mu_i p$ gives
\begin{equation}
d_i=\mu_i p,
\label{eq:di_mu_p}
\end{equation}
and averaging across authors,
\begin{equation}
\bar d=p\bar\mu,
\label{eq:dbar_p}
\end{equation}
where
\begin{equation}
\bar d=\frac1N\sum_{i=1}^N d_i, \qquad \bar\mu=\frac1N\sum_{i=1}^N\mu_i.
\label{eq:dbar_mubar}
\end{equation}
The equal-load restriction \eqref{eq:bar_d} then gives
\begin{equation}
p\bar\mu=1-\frac{K}{M}=\frac{M-K}{M},
\label{eq:p_mubar}
\end{equation}
from which
\begin{equation}
p=\frac{M-K}{M\bar\mu}.
\label{eq:p_interior}
\end{equation}
Because $d_i=p\mu_i$,
\begin{equation}
\sigma_d^2=p^2\sigma_\mu^2,
\label{eq:sigmad_mu}
\end{equation}
where
\begin{equation}
\sigma_\mu^2=\frac1N\sum_{i=1}^N(\mu_i-\bar\mu)^2.
\label{eq:sigma2bar}
\end{equation}
Also $\theta_i=0$ for all $i$, so $\gamma=0$. Proposition~1 becomes
\begin{equation}
p^2\sigma_\mu^2
\geq
\frac{K(M-K)}{M^2(M+1)}.
\label{eq:cv_step1}
\end{equation}
Substituting \eqref{eq:p_interior},
\begin{equation}
\frac{(M-K)^2}{M^2\bar\mu^2}\sigma_\mu^2
\geq
\frac{K(M-K)}{M^2(M+1)}.
\label{eq:cv_step2}
\end{equation}
Cancelling the positive factor $(M-K)/M^2$,
\begin{equation}
\frac{M-K}{\bar\mu^2}\sigma_\mu^2
\geq
\frac{K}{M+1}.
\label{eq:cv_step3}
\end{equation}
Hence
\begin{equation}
\boxed{
\CV(\mu)^2
\equiv\frac{\sigma_\mu^2}{\bar\mu^2}
\geq
\frac{K}{(M-K)(M+1)}.
}
\label{eq:cv_condition}
\end{equation}
For $M=10$ and $K=2$,
\begin{equation}
\CV(\mu)
\geq
\sqrt{\frac{2}{8\cdot11}}
=\sqrt{\frac{1}{44}}
\approx0.15076,
\label{eq:cv_numeric}
\end{equation}
which corresponds to a coefficient of variation of about $15.1\%$.

\subsection{Interiority condition}

The derivation requires $\mu_i p<1$ for every author. Using \eqref{eq:p_interior},
\begin{equation}
\mu_i\frac{M-K}{M\bar\mu}<1
\qquad\text{for all }i.
\label{eq:interiority_step1}
\end{equation}
The tightest restriction corresponds to the largest $\mu_i$:
\begin{equation}
\boxed{
\frac{\max_i\mu_i}{\bar\mu}
<\frac{M}{M-K}.
}
\label{eq:interiority_condition}
\end{equation}
Note that $\lim_{M\to\infty}M/(M-K) = 1$ for fixed $K$. When the fee sensitivity $\mu_i$ is non-degenerate, it follows that $\max_i\mu_i/\bar\mu>1$, and so the interior condition given above will eventually fail.

\section{Large candidate sets and feasible cut-off dispersion}\label{sec:largeM}

For fixed $K$, the threshold $\tau(M,K)$ converges to zero:
\begin{equation}
\tau(M,K)
=\frac{K(M-K)}{M^2(M+1)}
\sim\frac{K}{M^2}.
\label{eq:tau_asymptotic}
\end{equation}
At the same time, the equal-load restriction forces $\bar d\to1$, and feasible cut-off dispersion shrinks. Since $0\leq d_i\leq1$, it follows that $d_i^2\leq d_i$, and so
\begin{equation}
\sigma_d^2
=\frac{1}{N}\sum_i d_i^2-\bar d^2
\leq\bar d-\bar d^2
=\bar d(1-\bar d).
\label{eq:var_bound1}
\end{equation}
Using $\bar d=1-K/M$,
\begin{equation}
\boxed{
\sigma_d^2
\leq
\left(1-\frac{K}{M}\right)\frac{K}{M}
=\frac{K(M-K)}{M^2}.
}
\label{eq:var_bound2}
\end{equation}
For fixed $K$, $\sigma_d^2\to0$. When baseline-quality dispersion is bounded, the Cauchy-Schwarz inequality gives
\begin{equation}
|\gamma|
=|\Cov(\theta,d)|
\leq\sigma_\theta\sigma_d.
\label{eq:cs_gamma}
\end{equation}
Hence if $\sigma_\theta$ remains bounded and $\sigma_d\to0$, it follows that $\gamma\to0$. If on the other hand $K/M\to\rho\in(0,1)$, then
\[
\bar d\to1-\rho,
\qquad
\tau(M,K)\sim\frac{\rho(1-\rho)}{M}.
\]

\section{Detailed proof of Corollary~1}\label{sec:corollary}

The main paper gives the following result.

\begin{corollaryone}
Let $\theta_i=0$ and $g_i(v)=v$. Suppose a share $s\in(0,1)$ of authors have fee sensitivity $\mu_{\min}$, and let
\[
r=\frac{\mu_{(2)}}{\mu_{\min}}>1,
\]
where $\mu_{(2)}$ is the next-lowest distinct value. Under the equal-load fee,
\[
d_i=\min\{\mu_i p,1\}.
\]
If
\[
M\geq\frac{Kr}{s(r-1)},
\]
all authors with $\mu_i>\mu_{\min}$ have zero submission probability and
\[
q_F=1-\frac{K}{2sM}.
\]
The quota gives higher mean reviewed value if and only if
\[
s(K+1)M<K(M+1),
\]
which holds in particular when $s\leq1/2$.
\end{corollaryone}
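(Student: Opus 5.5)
The plan is a guess-and-verify argument: conjecture the equilibrium in which only the $\mu_{\min}$ group is active, compute the fee that delivers equal load under this conjecture, and check consistency. With $\theta_i=0$ and $g_i(v)=v$, the cut-off rule gives $d_i=\min\{\mu_i p,1\}$ directly. The expected load $L(p)=M\sum_i\{1-d_i(p)\}$ is continuous and non-increasing in $p$. It is strictly decreasing wherever some author is interior. Hence the equal-load fee is unique on the region where load is positive, so any candidate $p$ that produces load $NK$ and is self-consistent is \emph{the} equal-load fee.

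\textbf{Step 1 (candidate fee).} Suppose only the $sN$ authors with $\mu_{\min}$ submit, with a common cut-off $d=\mu_{\min}p\in[0,1)$. Equal load requires $sNM(1-d)=NK$. This gives
\[
d=1-\frac{K}{sM}, \qquad p^\ast=\frac{d}{\mu_{\min}}.
\]
We need $d\ge 0$, that is $sM\ge K$. This follows from the hypothesis, since $Kr/\{s(r-1)\}>K/s$ when $r>1$.

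\textbf{Step 2 (consistency).} The next group is priced out iff $\mu_{(2)}p^\ast\ge1$, that is iff $r\,(1-K/(sM))\ge1$. Rearranging gives $1-1/r\ge K/(sM)$, which is equivalent to $M\ge Kr/\{s(r-1)\}$. All higher $\mu_i$ then also satisfy $\mu_ip^\ast\ge1$. So at $p^\ast$ the conjectured activity pattern holds and the load is exactly $NK$. By the uniqueness remark above, $p^\ast$ is the equal-load fee. This self-consistency and uniqueness check is the only delicate point. In particular, one must rule out another fee at which more groups are active. Monotonicity of $L$ handles this: any $p<p^\ast$ strictly raises the load of the interior $\mu_{\min}$ group.

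\textbf{Step 3 (values and comparison).} Active authors' submitted values are uniform on $[d,1]$, so
\[
q_F=\frac{1+d}{2}=1-\frac{K}{2sM}.
\]
The same value could be read off Proposition~1 by computing $\sigma_d^2$ with $d_i\in\{d,1\}$, but the direct route is simpler. Lemma~1 gives $q_Q=1-(K+1)/\{2(M+1)\}$. The inequality $q_Q>q_F$ reduces to
\[
\frac{K}{sM}>\frac{K+1}{M+1},
\]
that is, $s(K+1)M<K(M+1)$. Finally, for $s\le 1/2$ we have $s(K+1)M\le (K+1)M/2$. This is less than $K(M+1)$ because $(K+1)M<2KM+2K$ whenever $M<KM+2K$, which always holds for $K\ge1$.
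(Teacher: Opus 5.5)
Your proof is correct and follows the paper's argument essentially step for step. Like the paper, you conjecture that only the $\mu_{\min}$ group is active, solve $sNM(1-d)=NK$ for $d$ and $p$, show that $\mu_{(2)}p\geq 1$ is equivalent to $M\geq Kr/\{s(r-1)\}$, and then compare the truncated-uniform mean $(1+d)/2$ with Lemma~1. You also add two points the main-text proof leaves implicit: the uniqueness argument ($L(p)$ is strictly decreasing on $[0,1/\mu_{\min})$ and zero beyond, so your self-consistent fee is \emph{the} equal-load fee), and the check $d>0$ via $r/(r-1)>1$.
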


\subsection{Cut-offs under the identity model}

With $\theta_i=0$ and $g_i(v)=v$, a candidate manuscript is submitted when
\[
U_{ij}\geq\mu_i p.
\]
Because $U_{ij}\in[0,1]$, the effective cut-off is
\begin{equation}
d_i=\min\{\mu_i p,1\}.
\label{eq:di_boundary}
\end{equation}
A cut-off of one means that the author's submission probability is zero, apart from the zero-probability endpoint event $U=1$. Suppose that only the fraction $s$ of authors with minimum fee sensitivity $\mu_{\min}$ remains active. Let their common interior cut-off be $d$; every other author has cut-off one. Expected total load is then given by
\begin{equation}
sN M(1-d).
\label{eq:active_load}
\end{equation}
Equating this to the quota load $NK$ gives
\begin{equation}
sNM(1-d)=NK,
\qquad
sM(1-d)=K,
\qquad
1-d=\frac{K}{sM},
\end{equation}
so
\begin{equation}
\boxed{d=1-\frac{K}{sM}.}
\label{eq:d_min_group}
\end{equation}
The threshold derived below implies $M>K/s$, so $d\in(0,1)$ as required. Since $d=\mu_{\min}p$,
\begin{equation}
p=\frac{d}{\mu_{\min}}.
\label{eq:p_min_group}
\end{equation}

\subsection{When is the next group priced out?}

Let $\mu_{(2)}$ denote the next-lowest fee sensitivity and define
\[
r=\frac{\mu_{(2)}}{\mu_{\min}}>1.
\]
The next group is priced out when its threshold reaches the upper support point:
\begin{equation}
\mu_{(2)}p\geq1.
\label{eq:next_out_1}
\end{equation}
Using \eqref{eq:p_min_group},
\begin{equation}
\mu_{(2)}\frac{d}{\mu_{\min}} \geq1, \qquad
rd \geq1.
\label{eq:next_out_2}
\end{equation}
Substituting \eqref{eq:d_min_group},
\begin{equation}
r\left(1-\frac{K}{sM}\right)\geq1,
\qquad
r-1 \geq\frac{rK}{sM},
\qquad
M\geq\frac{Kr}{s(r-1)}.
\label{eq:access_threshold}
\end{equation}
Hence
\begin{equation}
\boxed{
M\geq\frac{Kr}{s(r-1)}
}
\label{eq:access_threshold_box}
\end{equation}
is sufficient for the next-lowest group to be priced out; and every author with larger $\mu_i$ is then priced out as well.

\subsection{Mean reviewed value under the fee}

Conditional on submission by an active $\mu_{\min}$ author, $U$ is uniform on $[d,1]$. Hence its conditional mean is
\begin{equation}
\E(U\mid U\geq d)=\frac{1+d}{2}.
\label{eq:truncated_uniform_mean}
\end{equation}
All reviewed manuscripts under the fee come from this group, and so mean reviewed value is
\begin{equation}
q_F
=\frac{1+d}{2}
=\frac{1}{2}\left(2-\frac{K}{sM}\right)
=\boxed{1-\frac{K}{2sM}}.
\label{eq:qF_access}
\end{equation}
Under the quota, Lemma~1 gives
\begin{equation}
q_Q=1-\frac{K+1}{2(M+1)}.
\label{eq:qQ_access}
\end{equation}
The quota yields strictly higher mean reviewed value when
\begin{equation}
1-\frac{K+1}{2(M+1)} > 1-\frac{K}{2sM},
\qquad
\frac{K+1}{M+1} < \frac{K}{sM},
\qquad
s(K+1)M <K(M+1).
\label{eq:access_quality_condition}
\end{equation}
It follows that
\begin{equation}
\boxed{
q_Q>q_F
\quad\Longleftrightarrow\quad
s(K+1)M<K(M+1).
}
\label{eq:access_iff}
\end{equation}

\subsection{A simple sufficient condition}

If $s\leq1/2$, then
\begin{equation}
s(K+1)M
\leq\frac{(K+1)M}{2}.
\label{eq:s_half_1}
\end{equation}
For $K\geq1$ and $M>K$,
\begin{equation}
K(M+1)-\frac{(K+1)M}{2}
 =\frac{2KM+2K-KM-M}{2}
=\frac{(K-1)M+2K}{2}
>0.
\label{eq:s_half_2}
\end{equation}
Therefore
\[
\frac{(K+1)M}{2}<K(M+1),
\]
and $s(K+1)M<K(M+1)$. This proves the stated sufficient condition.

\subsection{Numerical example from the paper}

Take
\[
K=2,
\qquad
r=1.5,
\qquad
s=\frac12.
\]
The exclusion threshold is
\begin{align}
\frac{Kr}{s(r-1)}
&=\frac{2(1.5)}{(1/2)(0.5)}
=\frac{3}{0.25}
=12.
\label{eq:threshold_12}
\end{align}
From $M=12$, only the least fee-sensitive group remains active under the equal-load fee. At $M=12$,
\begin{equation}
d=1-\frac{2}{(1/2)12}=1-\frac13=\frac23.
\label{eq:d_numeric_access}
\end{equation}
The next group has cut-off multiplier $rd=1.5\times(2/3)=1$, exactly attaining the boundary. The mean reviewed value under the fee is
\begin{equation}
q_F=1-\frac{2}{2(1/2)12}=1-\frac{1}{6}=\frac56\approx0.8333,
\label{eq:qf_numeric_access}
\end{equation}
while the quota gives
\begin{equation}
q_Q=1-\frac{3}{2(13)}=1-\frac{3}{26}=\frac{23}{26}\approx0.8846.
\label{eq:qq_numeric_access}
\end{equation}
The difference is about $0.0513$.

\subsection{Asymptotic convergence of mean reviewed values}

From Corollary 1,
\begin{equation}
    q_Q - q_F = \frac{K}{2sM} - \frac{K+1}{2(M+1)}.
\end{equation}
Using the expansion
\begin{equation}
    \frac{1}{M+1} = \frac{1}{M}\left(1+\frac{1}{M}\right)^{-1} = \frac{1}{M}\left(1-\frac{1}{M}+O(M^{-2})\right),
\end{equation}
as $M \to \infty$, it follows that
\begin{equation}
    q_Q - q_F = \frac{K}{2sM} - \frac{K+1}{2}\left[\frac{1}{M}\left(1-\frac{1}{M}+O(M^{-2})\right)\right] = \frac{K-s(K+1)}{2sM}+O(M^{-2}).
\end{equation}
Immediately,
\begin{equation}
    \lim_{M \to \infty} (q_Q - q_F) = 0.
\end{equation}
For $K$ fixed and $s \in (0,1)$, the difference $q_Q - q_F$ can have either sign, although this vanishes as $M \to \infty$. Nevertheless, under the conditions of Corollary 1, the fee continues to exclude all authors outside the least fee-sensitive group.

\section{What is general and what is specific to the uniform benchmark?}\label{sec:generality}

\subsection{Curation and quota quality under a general distribution}

For any realised candidate values, a binding non-transferable quota selects the top $K$ manuscripts whenever perceived returns are strictly increasing in value, and so \eqref{eq:q_monotone_pathwise} is independent of the uniform distribution. Under i.i.d. continuous sampling from any integrable $F$, adding another candidate strictly raises the expected top-$K$ average value.

Suppose now that
\[
V_{ij}=\theta_i+U_{ij},
\]
where $U_{ij}$ are drawn i.i.d. from a distribution $H$, continuous with finite mean. Under the quota,
\begin{equation}
q_Q
=\bar\theta
+\frac{1}{K}\sum_{r=M-K+1}^{M}\E U_{(r:M)},
\label{eq:qQ_generalH}
\end{equation}
and so the quota side continues to be described by upper order statistics.

\subsection{Fee quality under a general within-author distribution}

Let the fee induce an effective threshold $t_i$ in $U$-space. Denote the survival function by
\[
\bar H(t)=1-H(t).
\]
Immediately, the expected load from author $i$ is
\begin{equation}
M\bar H(t_i),
\label{eq:general_load}
\end{equation}
the expected reviewed value is
\begin{equation}
M\left\{
\theta_i\bar H(t_i)
+\int_{t_i}^{\infty}u\,dH(u)
\right\},
\label{eq:general_review_value}
\end{equation}
and the equal-load restriction is
\begin{equation}
\frac{1}{N}\sum_{i=1}^N\bar H(t_i)=\frac{K}{M}.
\label{eq:general_equal_load}
\end{equation}
Under $U[0,1]$, the survival probability is linear in the cut-off and the truncated first moment is quadratic. The fee comparison reduces to the first two cross-author moments of $d_i$ and its covariance with $\theta_i$. For a general distribution $H$, the comparison need not depend only on the first two moments of the cut-offs. The economic mechanism, however, is unchanged.

\section{Scope and boundary cases}\label{sec:extensions}

\subsection{Heterogeneous candidate sets}

If $M_i<K$, author $i$ submits at most $M_i$ manuscripts and leaves some submission rights unused. In that case, the quota does not bind, and the total quota load is then given by
\[
\sum_i \min\{K,M_i\}
\]
rather than $NK$. If $M_i$ (which must be greater than $K$) varies, then each author will still submit their top $K$ candidates. It follows that the gain from curation depends on $M_i$.

\subsection{Unbounded project values}

Corollary~1 relies on bounded values. With values and perceived returns unbounded above, every author retains a positive probability of drawing a candidate that clears any finite fee, so complete exclusion cannot occur.

\subsection{Transferable submission rights}

If submission rights are tradeable, authors with stronger candidate sets can buy more rights. This restores flexibility but no longer reserves review capacity across authors.

\subsection{From reviewed value to welfare}

A full welfare analysis could include the social value of research, author surplus, distributional effects of monetary fees, administrative costs, delay, referee disutility, fee revenue, and the possibility that rejected papers will be resubmitted elsewhere.

  \endgroup
\fi

\end{document}